\newtheorem{theorem}{Theorem}
\newtheorem{lemma}[theorem]{Lemma}
\newtheorem{observation}[theorem]{Observation}
\newcommand{\ro}[1]{\textcolor{black}{#1}}
\newcommand{\je}[1]{\textcolor{black}{#1}}
\newcommand{\he}[1]{{\color{black}#1}}
\newcommand{\GP}{{\text{GP}}}
\newcommand{\TC}{{\text{TC}}}
\newcommand{\TO}{{\text{TO}}}
\newcommand{\OPT}{{\text{opt}}}
\newcommand{\e}{{\rm e}}
\newcommand{\RR}{\mathbb{R}}
\newcommand{\id}{\mathbbm{1}}
\newcommand{\mc}[1]{\mathcal{#1}}
\newcommand{\proj}[1]{\vert #1\rangle\!\langle#1 \vert}
\newcommand{\norm}[1]{\left\Vert #1 \right\Vert}
\newcommand{\Tr}{\operatorname{tr}}
\newcommand{\tr}{\Tr}
\newcommand{\hams}{\mc{H}} 
\newcommand{\localhams}{\hams_{\mathrm{loc}}} 
\newcommand{\maps}{\mc{M}} 
\newcommand{\mm}{\Omega} 
\newcommand{\relent}[2]{D\left(#1\!\mid\mid\!#2\right)} 
\newcommand{\fu}{Dahlem Center for Complex Quantum Systems, Freie Universit{\"a}t Berlin, 14195 Berlin, Germany}
\begin{document}
\title{{Second laws} under control restrictions}

\author{H.\ Wilming,  R.\ Gallego, and J.\ Eisert}

\affiliation{\fu}

\begin{abstract}
The second law of thermodynamics, formulated as an ultimate bound on the maximum extractable work, has been rigorously derived in multiple scenarios. However, the unavoidable limitations that emerge due to the lack of control on small systems are often disregarded when deriving such bounds, which is specifically important in the context of quantum thermodynamics. Here, we study the maximum extractable work with limited control over the working system and its interaction with the heat bath. We derive a general second law when the set of accessible Hamiltonians of the working-system is arbitrarily restricted. We then apply our bound to particular scenarios that are important in realistic implementations: limitations on the maximum energy gap and local control over many-body systems. We hence demonstrate in what precise way the lack of control affects the second law. In particular, contrary to the unrestricted case, we show that the optimal work extraction is not achieved by simple thermal contacts. Our results do not only generalize the second law to scenarios of practical relevance, but also take first steps in the direction of local thermodynamics.
\end{abstract}

\maketitle
\section{Introduction}
Recently, there has been much progress in our understanding of thermodynamics {in} the quantum domain. Particular emphasis has been put on deriving fundamental bounds on how much work can {precisely} be extracted in a thermodynamical process under {meaningful} assumptions. These bounds are indeed understood as the second law of thermodynamics in Thomson's formulation. Progress has been made {specifically} in {deriving} bounds that apply to the most general scenarios, including non-equilibrium states \cite{Alicki,Alicki2,Esposito,Linden10,Brunner12,Anders13,OOE}, account for work fluctuations \cite{Jarzynski97,Egloff12,Aberg13}, and more general classes of interactions between the working system and the heat bath \cite{Resource,ThermoMaj,ThermoMaj2,SecondLaws,SL2,Renes14,Gallego2013}.

Clearly, in {any} real experiment, one will face various specific limitations on the allowed operations, 
{rendering it} in general impossible to saturate the second law. {In} most cases these are limitations that pertain to the particular technological implementation of the thermal machine and do not encode any fundamental limitation in the same way the second law does. Nonetheless, specially when dealing with systems at the scale where quantum effects become relevant, {it seems imperative to consider} general classes of limitations on the control of the systems used in the protocol of work extraction. These limitations are not 
{specific to} the substrate or {the} technology employed, but {are of} 
fundamental nature as they will {naturally} emerge in any conceivable implementation. Thus, the tighter bounds derived on the extractable work when accounting for such limitations should indeed be understood as quantum versions of the second law, in the sense that they encode the ultimate bounds conceivable when {making use of} small systems. 

\begin{figure}[t!b]
\label{fig:bit}
\centering
\includegraphics[width=4.6cm]{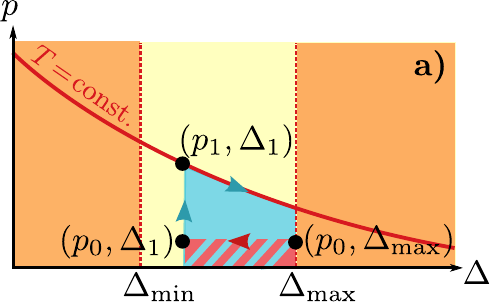}
\includegraphics[width=4.6cm]{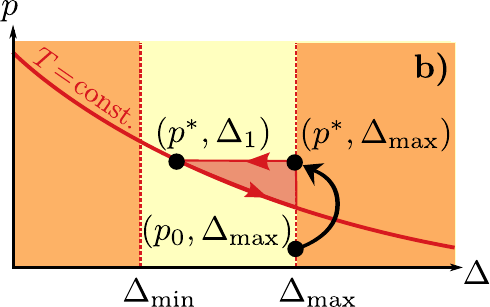}
\caption{An example exhibiting an energy-constraint.  $p$ is the excitation probability of a classical bit and $\Delta$ is the energy in the excited state. The dark orange region is forbidden. 
Upper figure: Example of a general protocol using WTC. The total work extracted work (solid blue region) is negative.
Lower figure: The black arrow denotes an initial thermalizing map in $\mc{M}_{\text{TO}}$. Then, the transition $(p^*,\Delta_{\text{max}})\rightarrow (p^*,\Delta_1)$ takes place, followed by an isothermal transition back to $\Delta_{\text{max}}$. The total work extracted (red solid area) is positive.}
\end{figure}

In this work we introduce a general framework {allowing} to study bounds on work extraction when limitations are imposed on the control of the systems involved. These limitations are of two classes, namely 
\begin{itemize}
\item[i)] Restrictions on the accessible set of Hamiltonians of the working system and 
\item[ii)] restrictions on the allowed interactions of the working sytems with the heat bath available. 
\end{itemize}
Regarding the class i), we will first introduce a general framework which considers an arbitrary family $\hams$ of allowed Hamiltonians. Then we will 
{turn to studying} two fundamental limitations that emerge ubiquitously: 
\begin{itemize}
\item[i.a)] Bounds on the maximum energy gap of the Hamiltonians in $\hams$ and 
\item[i.b)] having only access to change the local Hamiltonians of many-body systems. 
\end{itemize}
These limitations are introduced in a completely generic fashion, without making reference to any particular model. Thus, they encode fundamental limitations known to be present in any implementation.

Regarding the limitations of the class ii), we will consider different scenarios that model the degree of control that the experimenter has over the {bath's} degrees of freedom. The simplest model, corresponding to {the absence of any} fine-tuned control, is thermal contact (TC) which merely thermalises the system at the bath's temperature. At the opposite end {of the hierarchy of meaningful models of system-bath contacts} are the so-called thermal operations (TO), where full control is assumed on all the degrees of freedom of the heat {bath} \cite{ThermoMaj,ThermoMaj2}.

Our main result is to derive a general second law for the case of most physical relevance, where constraints i) and ii) come {into play}. We {identify} a bound to the maximum extractable work when the Hamiltonians are restricted to a class $\hams$ and the interaction with the bath is modeled by TC. Our result applies to initial and final states out of equilibrium and non-cyclic processes. {This bound is then used} to study how the different limitations interplay with each other when they are relaxed. In particular, it is has {previously} been  shown that the limitations of the class ii) alone do not tighten the bounds on work extraction \cite{Aberg13}. This observation -- which encodes the surprising fact that having {complete} control over the heat bath offers no advantage -- has diminished the interest on comparing different models for the degree of control over the bath, since it is seemed irrelevant to the problem of work extraction. However, we show {that the situation is radically different in the presence of limitations}: control over the bath's degree of freedom is advantageous for work extraction if other restrictions are simultaneously imposed, namely, i.a) and i.b) {\cite{RossnagelComment}}.

\section{Set-up}
We consider a finite-dimensional quantum system, 
initially described by a pair $\big(\rho_0,H_0\big)$ of an initial quantum state $\rho_0$ and an initial Hamiltonian $H_0$. 
This system can undergo an arbitrary \emph{protocol} consisting of two types of operations.
The first kind of operation is a \emph{unitary time-evolution} for some time $T:=t_2-t_1$, corresponding to a transformation
\begin{equation}\label{eq:unitevol}
\big(\rho_{t_1},H_{t_1}\big)\mapsto \big(U_T\rho_{t_1} U_T^\dagger,H_{t_2}\big)=:\big(\rho_{t_2},H_{t_2}\big),
\end{equation} 
$U_T$ {reflecting} the unitary time-evolution under a time-dependent Hamiltonian $t\mapsto H_t$ from $t_1$ to $t_2$. 
{We will now} introduce {a} first class of restrictions on the {control}: We model the lack of control over the system by restricting the time-dependent Hamiltonian $H_t$ to a set $\hams(H_0)$ which in general depends on the initial Hamiltonian, i.e., $H_t\in\hams(H_0)$ for all times \cite{PathConnected}.
We will assume that every such operation costs an amount of average work {given} by the difference of the system's average energy 
$\langle W\rangle = \Tr(\rho_{t_1}H_{t_1}) - \Tr(\rho_{t_2}H_{t_2})$ \cite{footnote1}.

The second kind of allowed operations models the coupling of the system to heat baths at some \emph{fixed} inverse temperature $\beta>0$. 
Given that the work cost at subsequent operations of the kind \eqref{eq:unitevol} does not depend on the state of the bath, but on the state of the system, it suffices to describe the thermal contact as an effective quantum map on the system:
We allow that, at any time $t$, a quantum channel $\mc{G}_t$ can be applied, resulting in a transformation of the form
\begin{align}\label{eq:mapG}
\big(\rho_t,H_t\big) \mapsto \left(\mc{G}_{t}(\rho_t),H_t\right).
\end{align}
In general we will refer to such quantum channels as \emph{thermalizing maps}. $\mc{G}_t$ models the effective time-evolution of the system when it is put in contact with a bath at time $t$ and does not cost any work. The specific form of $\mc{G}_t$ will depend not only on $H_t$,  but also on the heat bath that is considered and the way one makes it interact with the system. This is mathematically expressed by restricting the map to be in a particular set, i.e. $\mc{G}_t \in \maps\big(H_t\big)$. The choice of $\maps$ encodes in which way we model the bath and the degree of control over it to implement a given interaction with the system. 
The minimal assumption on the set of maps $\maps$, is that any $\mc{G}_t \in \maps\big(H_t\big)$ leaves the thermal state of the system invariant, 
\begin{equation}
\label{eq:gibbspreserving}
\mc{G}_t\left(\omega_{H_t}\right)=\omega_{H_t},
\end{equation}
where 
\begin{equation}
	\omega_{H_t}=\exp(-\beta H_t)/Z 
\end{equation}
denotes the Gibbs state of the system at inverse temperature $\beta>0$. We will also express this by saying that the thermalizing maps are \emph{Gibbs-preserving} \cite{ThermoMaj}. Condition \eqref{eq:gibbspreserving} is necessary since if it would be violated, one could create states out of equilibrium from thermal states, which would make work extraction trivial in the sense that it can be performed without expenditure of resources. Nonetheless, even if condition \eqref{eq:gibbspreserving} is satisfied, implementing an arbitrary Gibbs-preserving map requires in general high degree of control over the degrees of freedom of the thermal bath, which may be in most situations of interest unrealistic. The second kind of restrictions that we consider will be concerned with physically meaningful limitations on such control, which will determine in turn a restricted set of maps $\maps$.

We now define a protocol of work extraction $\mc{P}$ as an arbitrary finite
sequence of operations as in Eqs.\ \eqref{eq:unitevol} and \eqref{eq:mapG}, from an initial {condition} $p_0:=(\rho_0,H_0)$ to a final $p_f:=(\rho_{t_f},H_{t_f})$.
The total amount of work extracted will be the sum of the work extracted in each operation and it will depend on both the initial and final condition and the protocol $\mc{P}$.
This protocol will involve allowed operations  that respect the potential physical constraints on $\hams$ and makes use of thermalizing maps ${\cal M}$; 
we denote the set of all protocols fulfilling such constraints by $\mathcal{P}_{\hams,\mathcal{M}}$.
These limitations have an impact on the optimal 
{work that can be extracted}
\begin{equation}\label{eq:optwork}
	\langle W \rangle_{\OPT}^{\hams,\mathcal{M}}(p_0,p_f):=\sup_{\mc{P}\in \mathcal{P}_{\hams,\mathcal{M}} } \langle W \rangle \big(\mc{P},p_0,p_f\big).
\end{equation}
Usually, it is of particular relevance to bound the maximum work that can be extracted from a given initial condition $p_0$ in a cyclic protocol where $H_{t_f}=H_0$ because of its relation with the formulation of the second law; in such case we will simply use the notation
\begin{equation}
\langle W \rangle_{\OPT}^{\hams,\mathcal{M}}(p_0):=\sup_{p_f=(\rho_f,H_0)}\langle W \rangle_{\OPT}^{\hams,\mathcal{M}}(p_0,p_f).
\end{equation}

\section{Thermalising maps and second law}
We now turn to discussing the considered models for the set of \emph{thermalizing maps} $\maps$, which are physically meaningful and standard in the quantum thermodynamics literature.
The first such operation models simple thermal contact (TC) with the bath. Formally, it is captured as
\begin{equation}\label{def:wtcmap}
\maps_{\TC}\big(H_t\big)=\lbrace \mc{G}_t : \mc{G}_t(\rho)=
\omega_{H_t} \rbrace,
\end{equation}
with the Gibbs state $\omega_{H_t}=\exp(-\beta H_t)/Z_{H_t}$, where $Z_{H_t}=\tr \big(\exp (-\beta H_t)\big)$.
This set of maps merely contains one single element that maps every state to the Gibbs state at the bath's temperature. This is the map implemented by a sufficiently long time evolution under a sufficiently weak arbitrary system-bath interaction \cite{Riera12}. As such, it requires no control over the degrees of freedom of the heat bath.

The second class of operations that we consider is constituted by thermal operations (TO). They 
depend on the Hamiltonian $H_B$ of the particular 
bath at hand and are defined as
\begin{equation}\label{def:tomaps}
\maps_{\TO}\big(H_t\big)=\lbrace\mc{G}_t : \mc{G}_t(\rho)=
\tr_B( U (\omega_{H_B} \otimes \rho) U^{\dagger}) \rbrace
\end{equation}
for any unitary $U$ so that $[U,H_B+H_t]=0$ \cite{ThermoMaj2,Resource}. Thermal operations have originally been introduced in the framework of \emph{single-shot} thermodynamics, albeit for the same reason that we use them here: they model precisely the case of arbitrary control over system and heat bath when energy and entropy are conserved exactly. 
For convenience we will collect all thermalizing maps fulfilling the Gibbs-preserving condition w.r.t.\ $H_t$ in the set $\maps_{\GP}\big(H_t\big)$. Then we have {the strict inclusions \cite{Faist2014}} $\maps_{TC}\big(H_t\big)\subset\maps_{TO}\big(H_t\big)\subset \maps_{GP}\big(H_t\big)$.
\he{We are now in position to state the second law in terms of work-extraction as
\begin{equation}
\langle W\rangle^{\GP}_{\OPT}(p_0,p_f) \leq F(\rho_0,H_0)-F(\rho_f,H_f),
\end{equation}
with $p_0=(\rho_0,H_0)$ and $p_f=(\rho_f,H_f)$. This bound has been shown before \cite{Aberg14}, but for convenience of the reader we give a proof in Section \ref{sec:aberg}.}

In the remainder of this work we will study how restrictions on the sets $\hams(H_0)$ and $\maps(H_t)$ interplay and influence the maximum value for work extraction $\langle W \rangle_{\OPT}^{\hams,\mathcal{M}}$ in Eq.\ \eqref{eq:optwork}. In particular, we will now derive the form of the second law when both restrictions on $\maps$ and $\hams$ come into play.

\section{General second law under control restrictions} 
We will now consider the scenario where control is restricted i) to an arbitrary family of Hamiltonians $\hams$ on the system and ii) no control over the degrees of freedom of the heat bath. We derive the most general form of a the second law in the sense that it allows for initial and final states out of equilibrium and non-cyclic processes. Let us define $\mc{U}[H_0]$ to be the unitary group generated by arbitrary time-evolutions under time-dependent Hamiltonians $H_t\in \hams(H_0)$. The non-equilibrium free energy of $(\rho,H)$ is given by $F(\rho,H) = \Tr(\rho H) - S(\rho)/\beta$. 
We define the (von Neumann) free energy difference to the Gibbs state by $\Delta F(\rho,H) = F(\rho,H)-F(\omega_H,H)$.
The function $\Delta F$ quantifies how far out of equilibrium $(\rho,H)$ is. 

\begin{theorem}[Second law under control restrictions]\label{thm:maxworkbound} The maximum work that can be extracted in a protocol $\mc{P}$ from {the} pair $p_0=(\rho_0,H_0)$ to $p_f=(\rho_f,H_f)$ 
{by} combining time-dependent Hamiltonians from $\hams(H_0)$ and thermalizing maps of the form $\mc{G}_{t}\in \maps_{\TC}\big(H_t\big)$ is bounded by
 \begin{eqnarray}
\nonumber \langle W \rangle^{\hams,\TC}_{\OPT}
 (p_0,p_f) &\leq&  F(\rho_0,H_0)\,-F(\rho_f,H_f)\\
\label{eq:maxworkbound} &-&\!\!\!\!\inf_{{H_t\in \hams(H_0),\, \sigma\in \mathcal{U}[H_0](\rho_0)}}\!\!\!\!\!\!\Delta F(\sigma,H_t),
\end{eqnarray}
where $\mathcal{U}[H_0](\rho_0)$ is the unitary orbit of $\rho_0$ with respect to $\mathcal{U}[H_0]$. 
\je{The bound can be saturated arbitrarily well.}
\end{theorem}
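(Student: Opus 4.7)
The plan is to decompose any protocol $\mc{P} \in \mathcal{P}_{\hams, \maps_{\TC}}$ into its elementary pieces and repackage the total extracted work in terms of the non-equilibrium free energy $F$. Because $\maps_{\TC}(H_t)$ contains only the channel that replaces the current state by the Gibbs state $\omega_{H_t}$, a protocol is fully characterised by the finite number $N$ of thermal contacts together with the Hamiltonian trajectory $t\mapsto H_t\in\hams(H_0)$ and the unitaries $U_k$ acting between them. Writing $\tilde{\rho}_k$ for the state immediately before the $k$-th contact, one has $\tilde{\rho}_1 = U_1\rho_0 U_1^\dagger$, $\tilde{\rho}_k = U_k\,\omega_{H_{k-1}}\,U_k^\dagger$ for $k\geq 2$, and $\rho_f = U_{N+1}\,\omega_{H_N}\,U_{N+1}^\dagger$. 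The total extracted work is then the sum of the energy drops $\Tr(\rho H_{\mathrm{in}}) - \Tr(\rho' H_{\mathrm{out}})$ across the unitary segments, since thermal contacts cost nothing.

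Next I substitute the identity $\Tr(\rho H) = F(\rho,H) + S(\rho)/\beta$ into this sum. Since unitaries preserve the von Neumann entropy, the entropy contributions telescope and cancel exactly, delivering the clean identity
\begin{equation*}
\langle W \rangle = F(\rho_0,H_0) - F(\rho_f,H_f) - \sum_{k=1}^N \Delta F(\tilde{\rho}_k,H_k).
\end{equation*}
This makes the physical content of each contact transparent: the $k$-th one irreversibly dissipates the non-negative free energy $\Delta F(\tilde{\rho}_k,H_k) = D(\tilde{\rho}_k\|\omega_{H_k})/\beta$. Dropping all but the $k=1$ term (each term is non-negative) and noting that $(\tilde{\rho}_1,H_1)\in\mathcal{U}[H_0](\rho_0)\times\hams(H_0)$ yields the claimed upper bound after infimising over the admissible pairs.

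For the saturation claim the plan is to construct an almost reversible protocol in which only a single contact is ``lossy''. First a unitary in $\mathcal{U}[H_0]$ steers $(\rho_0,H_0)$ into a pair $(\sigma^*,H^*)$ within $\epsilon$ of the infimiser; a single thermal contact at $H^*$ then dissipates $\Delta F(\sigma^*,H^*)\approx\inf\Delta F$; a quasi-static isothermal stage, realised by many infinitesimal Hamiltonian increments interleaved with thermalizations (each contributing only second-order dissipation), transports $(\omega_{H^*},H^*)$ to some $(\omega_{\tilde H},\tilde H)$ with $S(\omega_{\tilde H}) = S(\rho_f)$; and a final unitary completes the process at $(\rho_f,H_f)$. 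In the joint limit of vanishing step size and $\epsilon\to 0$ the bound is saturated. The upper-bound calculation is essentially a telescoping trick and is not the main difficulty; the real obstacle is this saturation step, which requires $\hams(H_0)$ to be rich enough -- in the path-connected sense alluded to by the footnote attached to $H_t\in\hams(H_0)$ -- both to support the quasi-static segment and to contain an entropy-matching intermediate Hamiltonian $\tilde H$. This is also why the statement is only that the bound can be saturated ``arbitrarily well''.
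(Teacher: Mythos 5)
Your proposal is correct and follows essentially the same route as the paper: decompose the protocol into unitary segments separated by thermal contacts, use entropy conservation under unitaries to rewrite the telescoping energy differences in terms of $F$, keep only the dissipation $\Delta F(\tilde{\rho}_1,H_1)\geq 0$ of the first contact (the paper's inequality $F(\omega_H^U,H')\geq F(\omega_{H'},H')$ is exactly your statement that the later $\Delta F(\tilde{\rho}_k,H_k)$ are non-negative), and saturate via a quasi-static isothermal stage. Your exact identity $\langle W\rangle = F(\rho_0,H_0)-F(\rho_f,H_f)-\sum_k \Delta F(\tilde{\rho}_k,H_k)$ is a slightly cleaner packaging of the same computation, and your remarks on what saturation requires of $\hams(H_0)$ are more explicit than the paper's.
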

\he{\begin{proof}
The initial step of the protocol must be given by unitary dynamics, since otherwise no work can be extracted: If we first thermalise, we effectively start from the Gibbs state and no work can be extracted from a Gibbs state at the same temperature as the heat bath. We will therefore assume that the protocol starts with unitary dynamics. 
We will write $(\rho_{j},H_{j})$ for the quantum state and Hamiltonian prior to the $j$-th thermalisation. We hence have that $\rho_{1}=U_{1}\rho_0 U_{1}^{-1}$ and 
\begin{equation}
	\rho_{j+1}=U_{j+1}\omega_{H_{j}}U_{j+1}^{-1}=:\omega_{H_{j}}^U 
\end{equation}	
	where $U_{j+1}$ is the unitary that implements the time-dependent Hamiltonian dynamics starting after the $j$-th thermalisation and ending right before the $j+1$-th thermalisation. We assume that our protocol consists of $N$ thermalisations and that, without loss of generality, the last step in the protocol is unitary dynamics. 
The total work is then given by
\begin{eqnarray}\label{eq:workproof}
	\langle W  \rangle^{\TC} &= &\tr\big(\rho_{{0}} H_{{0}}\big)-\tr\big(\rho_{1} H_{1}\big)
	\nonumber\\
	&+&\sum_{j=1}^{N-1}\left(\tr\big(\omega_{H_{j}} H_{j})-\tr\big(\omega_{H_{j}}^UH_{j+1})\right)\nonumber\\
        &+&\tr\big(\omega_{H_N}H_N\big) - \tr\big(\rho_fH_f\big),
\end{eqnarray}
with  {$\rho_f$ being defined as} $\rho_f=U_{N+1}\omega_{H_N}U_{N+1}^{-1}$. We will now reformulate this expression in terms of the free energy. {Given any two Hamiltonians $H$ and $H'$ we have
\begin{eqnarray}
\tr (\omega_H H) - \tr (\omega_{H}^U H')&=& F(\omega_H,H)-F(\omega_H^U,H')\nonumber\\
&\leq& F(\omega_H,H) -F(\omega_{H'},H').
\end{eqnarray}
The first equality follows from applying the definition 
\begin{equation}
	F(\rho,H)=\tr(\rho H)-S(\rho)/\beta
\end{equation}	
 and using that $\omega_H$ and $\omega_H^U$ have the same entropy. The inequality then follows from the fact that $F(\omega_H,H)\leq F(\rho,H)$ for every $(\rho,H)$.
Now let us apply this inequality to \eqref{eq:workproof}, to obtain
\begin{eqnarray}
\langle W  \rangle^{\TC}&\leq &F\big(\rho_{{0}},H_{{0}}\big)-F\big(\rho_{1},H_{1}\big)\nonumber\\ &+&\sum_{j=1}^{N}
\left( F(\omega_{H_{j}},{H}_{j})-F(\omega_{H_{j+1}},H_{j+1})\right)\nonumber\\
&+&F\big(\omega_{H_N},H_N\big) - F\big(\rho_f,H_f\big),\nonumber\\
&=&F(\rho_{{0}},H_{{0}})-F(\rho_f,H_f)\nonumber\\\label{eq:calcwork1} &-&\left(F(\rho_{1},H_{1})-F(\omega_{H_{1}},H_{1})\right),
\end{eqnarray}
where Eq.\ \eqref{eq:calcwork1}} follows  again from the definition of $F$ and $S(\rho_{{0}})=S(\rho_{1})$. Since 
\begin{equation}
	F(\omega_{H},H)\leq F(\rho,H), 
\end{equation}
we can minimise $\rho_{1}$ over all possible states in $\mathcal{U}[H_{{0}}](\rho_{{0}})$ and $H_{1}$ over all Hamiltonians in $\hams(H_{{0}})$ to get the desired bound.

To show that the bound is achievable let us recall \cite{Gallego2013,Aberg13} that a slow change of Hamiltonians from $H_{1}$ to $H_{N}$ -- slow in the sense that the system is always in the thermal state for the given Hamiltonian at that time -- has a work cost of $F(\omega_{H_{1}},H_{1})-F(\omega_{H_N},H_N)$. Such a protocol is achievable to arbitrary accuracy. 
\end{proof}}
The bound given by \eqref{eq:maxworkbound} can be nicely interpreted by first noting that if $\hams(H_0)$ is unrestricted, the infimum-term in the bound is zero as any state can be approximated to arbitrary accuracy by a Gibbs state. Hence, we obtain the usual second law given by a difference of the non-equilibrium free energies \cite{Alicki,Alicki2,Esposito,Anders13,Aberg13,Gallego2013}. Thus, the infimum-term in \eqref{eq:maxworkbound} should be interpreted as a penalty 
that emerges as a consequence of the limitations on control. Lastly, {in order to let the penalty term vanish, it} suffices to take $\hams$ unrestricted (while $\maps$ is restricted); this fact may be lead one to the conclusion that {the} limitations on the bath control given by $\maps$ are indeed irrelevant for work extraction. We will now show that this is general not the case.

\section{Restrictions on bath control only}
{The previous findings have a remarkable consequence that has,
 to the best of our our knowledge, not been noted before:} If no other restrictions are put {onto} the valid operations, having control over the degrees of freedom of the heat bath is useless when it comes to work extraction. This is an astonishing fact, given that this holds regardless of the size, Hamiltonian, model or substrate that we use to describe the heat bath. Thus, controlling its degrees of freedom may be an enormous technological challenge that increases the valid set of operations, while not being useful at all for the main task in thermodynamics, i.e., 
 work extraction:
   
\begin{observation}[Universality of WTC in unrestricted setting] If $\hams(H_0)$ is the set of all Hamiltonians, protocols employing thermalizing maps of the form $\mc{G}_{t}\in \maps_{\TC}\big(H_t\big)$
or $\mc{G}_{t}\in \maps_\GP\big(H_t\big)$, both achieve the same maximum work. That is, $\langle W \rangle^{\TC}_{\OPT}(p_0,p_f) =
\langle W \rangle^{\GP}_{\OPT}(p_0,p_f)$.\label{obs:obs}
\end{observation}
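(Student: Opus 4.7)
The plan is a sandwich argument around the free-energy difference $F(\rho_0,H_0)-F(\rho_f,H_f)$. One direction is automatic: the strict inclusion $\maps_{\TC}(H_t)\subset \maps_{\GP}(H_t)$ noted in the excerpt means that every TC-protocol is also a GP-protocol, and enlarging the feasible set cannot lower the supremum. Hence $\langle W\rangle^{\TC}_{\OPT}(p_0,p_f)\leq \langle W\rangle^{\GP}_{\OPT}(p_0,p_f)$ comes for free; the content of the observation lies entirely in the reverse inequality.

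To obtain the reverse inequality I would chain two results already available in the excerpt. First, the GP second-law bound displayed just before Section 4 gives $\langle W\rangle^{\GP}_{\OPT}(p_0,p_f)\leq F(\rho_0,H_0)-F(\rho_f,H_f)$. Second, I would apply Theorem~\ref{thm:maxworkbound} with $\hams(H_0)$ taken to be the set of all Hamiltonians: the infimum in \eqref{eq:maxworkbound} then vanishes, and the achievability clause of the theorem asserts that the resulting bound is saturated to arbitrary accuracy by a TC-protocol. Chaining these yields
\begin{equation*}
F(\rho_0,H_0)-F(\rho_f,H_f)\leq \langle W\rangle^{\TC}_{\OPT}(p_0,p_f)\leq \langle W\rangle^{\GP}_{\OPT}(p_0,p_f)\leq F(\rho_0,H_0)-F(\rho_f,H_f),
\end{equation*}
so all three quantities coincide.

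The only step that requires genuine care -- and the one where the unrestricted nature of $\hams(H_0)$ is used essentially -- is the vanishing of the penalty term in Theorem~\ref{thm:maxworkbound}. I would handle it by a short approximation argument: for any $\sigma\in\mathcal{U}[H_0](\rho_0)$ and any $\varepsilon>0$, pick a full-rank approximation $\tilde{\sigma}_\varepsilon$ of $\sigma$ with $\|\sigma-\tilde{\sigma}_\varepsilon\|_1<\varepsilon$ and set $H_\varepsilon=-\beta^{-1}\log\tilde{\sigma}_\varepsilon$, so that $\omega_{H_\varepsilon}=\tilde{\sigma}_\varepsilon$ exactly. Continuity of the non-equilibrium free energy in its state argument (on any set of states with uniformly bounded support of $H_\varepsilon$, or via a Fannes-type estimate combined with the explicit form of $H_\varepsilon$) then forces $\Delta F(\sigma,H_\varepsilon)\to 0$ as $\varepsilon\to 0$, so the infimum is zero. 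Once this is established, the observation follows immediately by combining the bounds above.
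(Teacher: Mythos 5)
Your proof is correct and follows essentially the same route as the paper: the paper likewise obtains the observation by combining the GP second-law upper bound with the fact that, for unrestricted $\hams(H_0)$, the penalty term in Theorem~\ref{thm:maxworkbound} vanishes and the free-energy bound is achievable by a TC-protocol. The only step needing a touch more care is your approximation: an arbitrary full-rank $\tilde{\sigma}_\varepsilon$ with $\|\sigma-\tilde{\sigma}_\varepsilon\|_1<\varepsilon$ need not make $\relent{\sigma}{\tilde{\sigma}_\varepsilon}$ small (relative entropy is only lower semicontinuous in its second argument), so one should choose, e.g., $\tilde{\sigma}_\varepsilon=(1-\varepsilon)\sigma+\varepsilon\id/d$, for which $\Delta F(\sigma,H_\varepsilon)=\beta^{-1}\relent{\sigma}{\tilde{\sigma}_\varepsilon}\leq -\beta^{-1}\log(1-\varepsilon)\to 0$.
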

This follows straightforwardly from the results in Ref.\ \cite{Aberg13} {which} state that the maximum work by using Gibbs-preserving maps is given by $\langle W \rangle^{\GP}_{\OPT}(\rho_0,H_0)
= \Delta F(\rho_{{0}},H_{{0}})$; together with the usual second law for non equilibrium states in the presence of thermal contact $\langle W \rangle^{\TC}_{\OPT}(\rho_0,H_0)
= \Delta F(\rho_{{0}},H_{{0}})$, as it follows also from \eqref{eq:maxworkbound}.
Observation \ref{obs:obs} should not be confused with the {\it maximum-work principle} \cite{MaxEnt}. We will now discuss particular examples and show that indeed the universality of TC for work extraction is only 
valid when no other restrictions are imposed, inequivalent with Ref.\ \cite{Allahverdyan}, where the maximum-work principle is shown to not apply \he{in certain situations}.

\section{Breakdown of TC-universality} 
We will now provide a general argument allowing to understand why thermal contact ceases to be universal within scenarios where restrictions on $\hams$ are imposed. To understand it, it is necessary to recapitulate what the optimal cyclic protocol in the TC-setting without constraints {on the Hamiltonians} does. {We denote} such an optimal protocol by $\mc{P}^*_{\TC}(\rho_0,H_0)$. It consists of two parts. The first is to apply a transformation $(\rho_0,H_0) \mapsto (\rho_0,H_1)$, with $H_1$ such that $\rho_0=\omega_{H_1}$ (to arbitrary accuracy). Since the initial state $\rho_0$ is already in the Gibbs form for Hamiltonian $H_1$, putting it in WTC with the bath after this first step does not produce any heat dissipation. 
The second part is a sequence of transformations of the kind $\eqref{eq:unitevol}$, followed each by TC with the bath, so that the Hamiltonian goes back from $H_1$ to $H_0$. In the limit of infinite number of
{steps}, the second part is a quasi-static transformation along an iso-thermal. It is easy to verify that $\mc{P}^*_{\TC}(\rho_0,H_0)$ extracts  $\Delta F (\rho_0,H_0)$ as work \cite{Aberg13}. 

In the case where $\hams$ does constrain the set of allowed Hamiltonians, the first step of the optimal protocol is in general not possible. For a given $(\rho_0,H_0)$ there might be no Hamiltonian $H_1$ in $\mc{H}(H_0)$ such that $\rho_0=\omega_{H_1}$. An example is the scenario in which the Hamiltonians in $\hams(H_0)$ are all diagonal in the same basis but $\rho_0$ has a lot of coherences in this basis. This effect 
gives rise to the penalty term in Eq.\ (\ref{eq:maxworkbound}), which can, in the most drastic cases, diminish the maximal amount of extracted work to zero.

Now consider the setting where the bath is modelled by more general thermalizing maps, say $\maps_{\TO}(H_t)$, and assume there is indeed no Hamiltonian $H_1\in\hams(H_0)$ such that $\rho_0=\omega_{H_1}$. Then the free energy bound cannot be saturated by a protocol using weak thermal contact. In this setting, however, we may be able to use a thermal map from $\maps_{\TO}(H_0)$ to map $\rho_0$ to some state $\rho_1$ for which now there \emph{does exist} a corresponding Hamiltonian $H_1$ with $\rho_1=\omega_{H_1}$. If this is the case, we can, after having applied the map and by using the optimal protocol, extract an amount of work given by $\Delta F(\omega_{H_1},H_0)$. As long as the heat lost during this process, given by $F(\rho_0,H_0)-F(\rho_1,H_0)$, is smaller than the penalty term in Eq.\ (\ref{eq:maxworkbound}) we will be able to extract more work using thermal maps from $\maps_{\TO}(H_t)$ than by just using thermal contact. 
In the following, we use the reasoning from above to show that there are natural scenarios and initial conditions where \emph{no} work can be extracted using TC but work can be extracted using thermal operations. We will now illustrate this phenomena in two particularly relevant examples of physical restrictions.

\subsection{Example I: Bounds on the operator norm}

We first consider the scenario involving a bound on the operator-norm of the Hamiltonian. To be concrete, \ro{we consider a two-level system and Hamiltonian restrictions so that 
\begin{equation}	
\hams_{{\norm{H}}}(H_0):=\{H\ | \ \Delta_{\text{min}} \leq\norm{H}_{\infty} \leq \Delta_{\text{max}}, E=0\}, 
\end{equation}
where $E$ denotes the ground-state energy. We will now show that there exist initial configurations $(\rho_0,H_0)$ from which no positive work can be extracted using TC. 
Consider 
\begin{equation}
\rho_0=p_0 \proj{1} + (1-p_0)\proj{0} 
\end{equation}
with $p_0\leq \exp (-\beta \Delta_{\max})/{Z(\Delta_{\max})}$ and $H_0=\Delta_{\text{max}} \proj{1}$. In Section \ref{sec:app:operator-norm} we show that the optimal work $\langle W \rangle^{\TC}_{\OPT}$ can always be achieved by a protocol such that {the state and Hamiltonian before the first TC with the bath, denoted by $(\rho_{1},H_{1})$}, are diagonal in the same basis and $\rho_1=\rho_0$. Hence, it suffices to show that no work can be extracted if 
\begin{equation}
\hams(H_0)=\{\Delta_{1} \proj{1} \ : \ \Delta_{\text{min}} \leq \Delta_{1} \leq \Delta_{\text{max}} \}
\end{equation}. 
In this case, the problem is fully classical and system and Hamiltonian are described, before the first TC, by $(p_{0},\Delta_{1})$. }

\ro{The optimal protocol can be divided into two parts as depicted in Fig.\ 1 a) . First, a
  unitary evolution of the kind \he{of Eq.\ (1)}, which takes $(p_0,\Delta_0) \mapsto (p_1,\Delta_1)=(p_0,\Delta_1)$. 
Let us denote the work in this step by {$\langle W \rangle_1:=p_0(\Delta_{\text{max}}-\Delta_0)$}. The second part of the protocol contains all the operations applied after the first TC with the bath. The work extracted in the second part is denoted by $\langle W \rangle_2$. It is bounded by the free energy difference $\langle W\rangle _2 \leq F(\omega_{H_1},H_1)-F(\omega_{H_0},H_0)$. One can easily show that the total work fulfils:
\begin{align}
{\langle W \rangle_{\mathrm{opt}}^{\norm{H},\mathrm{TC}}(\rho_0,H_0)} = \langle W \rangle_1+\langle W \rangle_2 \leq 0.
\end{align}}

\ro{Next we show that if the same restrictions are imposed on the allowed Hamiltonians, but the thermalizing maps are given by $\maps_{{\TO}}$ {(or the more general $\maps_{{\GP}}$) then }there are situations in which one can extract a positive amount of work, as illustrated in Fig. 1 b). The key idea is that there exist maps {$\mc{G}_{t=0} \in \maps_{{\TO}}$} that map the initial state to another state $\rho^*$ given by 
\begin{equation}\label{eq:aht}
	\rho^*=p^*\proj{1}+(1-p^*)\proj{0},
\end{equation}
 with $p^*\geq \exp (-\beta
\Delta_{\text{max}})/{Z(\Delta_{\text{max}})}$. This is shown in Section \ref{sec:AHT} for continuity of the proof. Such a state $\rho^*$ has less free energy
than $\rho_0$, but a finite difference $\Delta F(\rho^*,H_0)> 0$ of
free energy to the Gibbs state. Once the system is in the state
$\rho^*$ one can simply apply the optimal protocol
$\mc{P}^*_{\TC}(\rho^*,H_0)$. This is possible, since now there
exists a Hamiltonian $H_1\in \hams(H_0)$ such that
$\rho^*=\omega_{H_1}$. {Denoting the whole protocol by $\mathcal{P'}$
  (that is, the composition of $\mc{G}_{t=0}$ followed by
  $\mc{P}^*_{\TC}(\rho^*,H_0)$), one can extract an amount of work
  \begin{equation}
  	\langle W\rangle(\mathcal{P'},\rho_0,H_0)= \Delta
  F(\rho^*,H_0)>0. 
  \end{equation}
  Clearly, $\mathcal{P'}$ is a protocol respecting
  the constraints of $\mathcal{H}_{\norm{H}}$ and $\mathcal{M}_{\TO}$,
  thus recalling \he{Eq.\ (4)} one finds that
\begin{align}
 \langle W\rangle_{\mathrm{opt}}^{{\norm{H}},\mathrm{TO}} (\rho_0,H_0) > \langle W\rangle_{\mathrm{opt}}^{{\norm{H}},\mathrm{TC}}(\rho_0,H_0)=0. 
\end{align}}
}
We conclude that if the set of allowed Hamiltonians is restricted by a {norm} upper bound, TO can extract a positive amount of work in some situations where TC cannot. Hence, thermal contact is not universal in this restricted scenario.

\subsubsection{Anomalous heat transfer}\label{sec:AHT}
Here we show that it is possible to perform a transition as indicated in Eq. \eqref{eq:aht}. Apart from its use in the previous section, we believe this result is interesting in its own right. Let us note that the existence of the GP-map, which maps the initial state to $\rho^*$, has a profound implication: It implies that a heat-bath at temperature $T$ can heat a colder system at a temperature $T_0<T$ to a temperature $T_1$ \emph{larger than $T$} while the total energy is conserved and the second law of thermodynamics, expressed in terms of free energies, is not violated. Moreover, it turns out that $T_1$ can be made larger by \emph{reducing} $T_0$: The colder the system is initially, the hotter it is after the GP-map. Motivated by this observation, we call this effect \emph{anomalous heat transfer}.

\he{
So let us show that there exist indeed maps in {$\mathcal{M}_{\TO}$}, which display anomalous heat transfer. That is, map a state of a classical bit with excitation probability less than the thermal one to one with a higher excitation probability than the thermal one. First note, that for pairs $(\rho,H)$ which are diagonal in the same basis, the image of the set of GP-maps and TO coincide \cite{Renes14,Faist2014}. 
Hence, it suffices to show the existence of a GP-map that performs the task above. For a classical bit, with Hamiltonian $\Delta \proj{1}$, the Gibbs-distribution is given by
\begin{align}
\omega_\Delta &= \frac{1}{1+\exp(-\beta \Delta)}\left(\exp(-\beta \Delta),1\right)\nonumber\\
 &= \left(\frac{1}{1+\exp(\beta\Delta)},\frac{1}{1+\exp(-\beta\Delta)}\right).
\end{align}
A GP-map is simply given by a stochastic matrix which has $\omega_\Delta$ as fixed point. All such matrices can be written as 
\begin{align}
\mc{G}_\Delta^r = \begin{pmatrix} r& (1-r)\exp(-\beta \Delta)\\ 1-r & 1-(1-r)\exp(-\beta \Delta)\end{pmatrix},
\end{align}
with $r\in [0,1]$. If $p=(p_e, 1-p_e)$ is an input distribution, than the output of the GP-map has excitation probability
\begin{align}
(\mc{G}_\Delta^rp)_1= \exp(-\beta \Delta)(1-r)(1-p_e)+r p_e. 
\end{align}
In particular for $r=0$ we obtain an excitation probability $\exp(-\beta \Delta)(1-p_e)$ which can easily be larger than $1~/~(1~+~\exp(\beta\Delta))$.}

\subsection{Example II: Local restrictions on a many-body system}
The second scenario that we consider encodes the incapability to change interactions in multi-partite systems. 
We consider the  scenario where multi-partite (interacting) Hamiltonians can only be manipulated locally, 
\begin{equation}
\localhams(H_0):=\bigl\{ H_0 + X \mid X = \sum_i H_i\bigr\},
\end{equation} 
with $H_i$ supported on subsystem labeled $i$. 
We will now prove the existence of a gap between the optimal work extraction within protocols using thermal contact and protocols thermal operations. As a step towards this we first show the following result, which is interesting in its own right.

\begin{theorem}[Passive states for weak thermal contact]\label{thm:passive}
There exist multipartite systems with initial conditions $p_0=(\rho_0,H_0)$ such that $\Delta F(\rho_0,H_0)>0$, from which \emph{no work} can be extracted in a cyclic process by thermal contact and $\hams(H_0)=\localhams(H_0)$. More {precisely,} $\langle W \rangle_{\OPT}^{\localhams,\TC}(p_0)\leq 0$.
\end{theorem}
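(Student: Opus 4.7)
The plan is to exhibit an explicit two-qubit example and then invoke Theorem~\ref{thm:maxworkbound}. I take the bipartite system $A\otimes B$ of two qubits with $H_0=0$ and $\rho_0=\proj{\Phi^+}$, where $\ket{\Phi^+}=(\ket{00}+\ket{11})/\sqrt{2}$. Since $S(\rho_0)=0$ while the Gibbs state of $H_0=0$ is $\id/4$, one immediately has $\Delta F(\rho_0,H_0)=\ln(4)/\beta>0$, so the state carries non-trivial non-equilibrium free energy.

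To establish $\langle W\rangle^{\localhams,\TC}_{\OPT}(p_0)\leq 0$, I would next identify the two ingredients entering the penalty term of Theorem~\ref{thm:maxworkbound}. Because $H_0=0$, the admissible Hamiltonians are exactly the sums $H_t=H_A+H_B$ of single-site operators, and the unitary group $\mc{U}[H_0]$ generated by such Hamiltonians is the local unitary group $U(2)\otimes U(2)$. Consequently the orbit $\mc{U}[H_0](\rho_0)$ consists of all maximally entangled pure states on $AB$, each having marginals $\sigma_A=\sigma_B=\id/2$ and vanishing total entropy $S(\sigma)=0$.

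The key step is to evaluate the infimum of $\Delta F(\sigma,H_t)$ over this orbit and over local $H_t$. I plan to use the standard identity $\beta\Delta F(\rho,H)=D(\rho\|\omega_H)$ together with the factorisation $\omega_{H_A+H_B}=\omega_{H_A}\otimes\omega_{H_B}$ and the mutual-information chain rule,
\begin{equation}
D(\sigma\|\omega_{H_A}\!\otimes\!\omega_{H_B})=I(A\!:\!B)_\sigma+D(\sigma_A\|\omega_{H_A})+D(\sigma_B\|\omega_{H_B}).
\end{equation}
For every $\sigma$ in the orbit, $I(A\!:\!B)_\sigma = 2\ln 2$ and the two single-site relative entropies are non-negative, so $\beta\Delta F(\sigma,H_t)\geq 2\ln 2$, with equality attained at $H_A=H_B=0$. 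Inserted into Theorem~\ref{thm:maxworkbound}, after maximising over the final state, this yields $\langle W\rangle^{\localhams,\TC}_{\OPT}(p_0)\leq\Delta F(\rho_0,H_0)-\ln(4)/\beta=0$, as claimed.

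The main obstacle I anticipate is to rule out any clever combination of a local unitary followed by a tailored local Hamiltonian that could outperform this bound. The mutual-information decomposition handles exactly this: the orbit preserves the Bell correlations encoded in $I(A\!:\!B)_{\rho_0}$, while every Gibbs state of a local Hamiltonian is a product state and therefore contains no correlations at all. The initial mutual information thus acts as an unavoidable penalty that precisely cancels the available non-equilibrium free energy.
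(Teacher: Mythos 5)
Your proposal is correct, but it proves the theorem by a genuinely different route than the paper. The paper takes $\rho_0=\mm$ (the maximally mixed state, whose unitary orbit is a single point) together with an \emph{interacting} product-of-involutions Hamiltonian $H_0=\otimes_i V_i$, and the key tool is the Peierls--Bogoliubov inequality of Lemma~\ref{Peierls}: since $\tr(\omega_V H_i)=0$ for every local $H_i$, one gets $F(\omega_V,V)\geq F(\omega_{H_t},H_t)$ for all $H_t\in\localhams(V)$, so the penalty term in Theorem~\ref{thm:maxworkbound} swallows all of $\Delta F(\mm,V)$. You instead take $H_0=0$ with an entangled pure state, so the free energy is stored entirely in correlations, and you control the now non-trivial orbit via the invariance of mutual information under local unitaries together with the decomposition $\relent{\sigma}{\omega_{H_A}\otimes\omega_{H_B}}=I(A{:}B)_\sigma+\relent{\sigma_A}{\omega_{H_A}}+\relent{\sigma_B}{\omega_{H_B}}$, exploiting that Gibbs states of non-interacting Hamiltonians factorize. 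Both arguments are sound applications of Theorem~\ref{thm:maxworkbound}. Your approach yields the appealing general lower bound $\beta^{-1}I(A{:}B)_{\rho_0}$ on the penalty for any initial state when $H_0$ is non-interacting, and makes the ``quantum'' origin of the locked free energy transparent; the paper's approach works for arbitrary $n$-partite systems with even local dimensions, keeps everything classical (diagonal), and --- importantly for what follows --- produces the specific pair $(\mm,\sigma_z\otimes\sigma_z)$ that is reused in Theorem~\ref{thm:nonuniversalitylocal} to exhibit the gap with thermal operations, something your $H_0=0$ example is less suited for since the local perturbations there do not interact with a nontrivial $H_0$.
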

\he{The proof of the theorem relies on the Peierls-Bogoliubov inequality \cite{PeierlsBogoliubov}, which we formulate in a way suitable for us in the subsequent Lemma and also prove for the convenience of the reader.

\begin{lemma}[Upper bound to free energies]\label{Peierls}
For any two Hamiltonians $A,B$,
$F(\omega_{A+B},A+B) \leq F(\omega_A,A)+\tr(\omega_AB)$.

\begin{proof}
Let 
\begin{equation}
Z_A=\tr(\exp(-\beta A))
\end{equation}
 and consider $\relent{\omega_A}{\omega_{A+B}}$, which is always positive. For any two Hamiltonians $A,B$ we have
\begin{align}
\tr(\omega_A\log \omega_B) = -\beta \tr(\omega_A B) - \log Z_B. 
\end{align}
Therefore, we get for the relative entropy
\begin{align}
\relent{\omega_A}{\omega_{A+B}} &= -\beta \Tr(\omega_AA) - \log Z_A \nonumber\\
& + \beta \tr(\omega_A(A+B))+\log Z_{A+B}\geq 0.
\end{align}
Cancelling the terms $\beta \tr(\omega_AA)$ and recognising that $F(\omega_A,A)=-\beta^{-1}\log Z_A$ finishes the proof.
\end{proof}
\end{lemma}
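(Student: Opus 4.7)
The plan is to derive the inequality from the non-negativity of the quantum relative entropy applied to the pair of Gibbs states $\omega_A$ and $\omega_{A+B}$. The key observation is that $F(\omega_C,C)$ equals $-\beta^{-1}\log Z_C$ for any Hamiltonian $C$, so the claim is equivalent to a statement about partition functions, and such statements are naturally controlled by relative-entropy arguments.

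First I would write $\relent{\omega_A}{\omega_{A+B}}\geq 0$, which is the standard Klein inequality for quantum states. The crucial algebraic step is that $\log\omega_C = -\beta C - \log Z_C\,\one$ for any Hamiltonian $C$ (since $\omega_C=\exp(-\beta C)/Z_C$), so that $\tr(\omega_A\log\omega_C)$ is linear in the observable $C$ and contributes a constant $-\log Z_C$. Expanding the relative entropy using this identity separately for $C=A$ and $C=A+B$ gives
\begin{equation}
\relent{\omega_A}{\omega_{A+B}} = -\beta\tr(\omega_AA)-\log Z_A + \beta\tr(\omega_A(A+B)) + \log Z_{A+B}.
\end{equation}
The $\beta\tr(\omega_AA)$ contributions cancel, leaving $\beta\tr(\omega_AB)+\log Z_{A+B}-\log Z_A \geq 0$.

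Finally I would rewrite this bound in terms of free energies using $F(\omega_C,C) = -\beta^{-1}\log Z_C$, which yields $F(\omega_{A+B},A+B)\leq F(\omega_A,A)+\tr(\omega_AB)$ after dividing by $\beta$ and rearranging. There is no real obstacle here: the only slightly delicate point is that $A$ and $B$ need not commute, so one cannot use a naive classical expansion, but the relative-entropy manipulation is insensitive to this because it only uses $\log\omega_C=-\beta C-\log Z_C\,\one$ operator-wise and the linearity of the trace. Klein's inequality takes care of the non-commutativity automatically, which is why this route is preferable to any attempt based on a Golden–Thompson-type bound on $Z_{A+B}$.
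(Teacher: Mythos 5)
Your proposal is correct and follows essentially the same route as the paper's own proof: non-negativity of $\relent{\omega_A}{\omega_{A+B}}$, expansion via $\log\omega_C=-\beta C-\log Z_C\,\one$, cancellation of the $\beta\tr(\omega_A A)$ terms, and the identification $F(\omega_C,C)=-\beta^{-1}\log Z_C$. The remark that this handles non-commuting $A$ and $B$ without any Golden--Thompson-type input is a nice additional observation but does not change the argument.
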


\begin{proof} {\it (Of Theorem~\ref{thm:passive})}
Without loss of generality we will assume that all Hamiltonians are traceless. Now imagine that there exists an $H_0$ such that
$F(\omega_{H_0},H_0) \geq F(\omega_{H_t},H_t)$
for all $H_t\in \localhams(H_0)$.
Since the maximally mixed state $\mm$ is invariant under arbitrary unitaries, Thm.\ \ref{thm:maxworkbound} implies that no work can be extracted from $(\mm, H_0)$. 
We now prove that for $\localhams(H_0)$ such special Hamiltonians indeed exist. For any $H_t\in \localhams(H_0)$ take $A=H_0$ and $B=H_t-H_0=:X$, 
remembering that 
\begin{equation}
X=\sum_i H_i
\end{equation}
 is a sum of local and trace-less terms.
Let us take as $H_0$ the operator 
\begin{equation}
	V=\otimes_{i=1}^n V_i 
\end{equation}
on an $n$-partite system, where the $V_i$ are trace-less and satisfy $V_i^2=\id$.
It follows that $\tr(VH_i)=0$ for every local operator $H_i$. The Gibbs state corresponding to $V$ is of the form 
\begin{equation}\omega_V = C(\id + \tanh(\beta)V)\end{equation} with 
some constant $C\in \RR$. Therefore, $\tr(\omega_VX)=0$ and using Lemma \ref{Peierls}, we get $F(\omega_V,V)\geq F(\omega_{H_t},H_t)$ for all $H_t\in\localhams(H_0)$, which proves the theorem. \\
This proof applies to all multipartite systems with even-dimensional local Hilbert-spaces. 
\end{proof}

We will use the existence of such \emph{passive states} identified in Thm.\ \ref{thm:passive} to show that TC is not universal in the scenario of restricted Hamiltonians $\localhams(H_0)$.}
\begin{theorem}[Non-universality of TC in the local setting]\label{thm:nonuniversalitylocal}
There exist $(\rho_0,H_0)$ such that for Hamiltonians restricted to {$\localhams(H_0)$}, TC is not universal for work extraction.
\end{theorem}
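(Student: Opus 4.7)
The plan is to combine Theorem~\ref{thm:passive} with the mechanism sketched in the ``Breakdown of TC-universality'' section. I would choose $H_0 = V = \otimes_{i=1}^{n} V_i$ as in the proof of Theorem~\ref{thm:passive} (with $n \geq 2$, $V_i$ traceless, $V_i^2 = \id$), and $\rho_0 = \mm$. Theorem~\ref{thm:passive} immediately gives $\langle W \rangle^{\localhams,\TC}_{\OPT}(\mm,V) \leq 0$, while $\Delta F(\mm,V)>0$ since $V \neq 0$. It therefore suffices to exhibit a single protocol in $\mc{P}_{\localhams,\TO}$ extracting strictly positive work from $(\mm,V)$.

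To see that the ``breakdown'' mechanism applies, I would first verify that no $H_1 \in \localhams(V)$ satisfies $\omega_{H_1}=\mm$: this would require $H_1 \propto \id$ and hence $\sum_i H_i = -V + c\id$, which is impossible for $n\geq 2$ because the tensor-product operator $V$ is not a sum of single-site terms. So the initial condition is precisely of the form identified after Theorem~\ref{thm:maxworkbound} as obstructing the TC-protocol.

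For the TO side, the plan is to construct a map $\mc{G} \in \maps_{\TO}(V)$ that takes $\mm$ to some $\rho^* = \omega_{H_1}$ with $H_1 = V + X \in \localhams(V)$, $X$ a nontrivial local operator, and then to run the optimal TC-protocol $\mc{P}^*_{\TC}(\rho^*, V)$ already described in the ``Breakdown of TC-universality'' section. Because $\rho^*$ is the Gibbs state for $H_1 \in \localhams(V)$, the unitary change $V \to H_1$ costs no entropy, and the subsequent quasi-static trajectory within $\localhams(V)$ brings $H_1$ back to $V$; the free-energy accounting used there yields total extracted work equal to $\Delta F(\omega_{H_1},V)>0$. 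Combined with the previous step this gives $\langle W\rangle_{\OPT}^{\localhams,\TO}(\mm,V)>\langle W\rangle_{\OPT}^{\localhams,\TC}(\mm,V)$ and proves non-universality.

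The main obstacle is the existence of the map $\mc{G}$. Since both $\mm$ and the target $\rho^*$ are diagonal in the eigenbasis of $V$, the identification of the images of $\maps_{\GP}$ and $\maps_{\TO}$ on classical states (already invoked in Section~\ref{sec:AHT}) reduces the task to exhibiting a Gibbs-preserving stochastic matrix $G$ with $G\mm = \omega_{H_1}$. The set of such images is a convex set containing both $\mm$ (via the identity) and $\omega_V$ (via the replacement channel), while $\{\omega_{V+X}\}_{X \in \text{local}}$ is a smooth submanifold through $\omega_V$ whose tangent space at $\omega_V$ spans local perturbations; a transversality/continuity argument then forces a nontrivial intersection. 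For a fully explicit version, I would carry out the smallest nontrivial case $V = \sigma_z \otimes \sigma_z$ and write down $\mc{G}$ by hand, in the spirit of the anomalous heat transfer map of Section~\ref{sec:AHT}.
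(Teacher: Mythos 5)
Your overall route is the paper's route: take $\rho_0=\mm$ and $H_0=V=\otimes_i V_i$, invoke Theorem~\ref{thm:passive} for $\langle W\rangle^{\localhams,\TC}_{\OPT}\leq 0$, then exhibit a single TO/GP map sending $\mm$ to $\omega_{V+X}$ for some nontrivial local $X$ and finish with $\mc{P}^*_{\TC}$, extracting $\Delta F(\omega_{V+X},V)>0$. The free-energy accounting and the observation that no $H_1\in\localhams(V)$ has $\omega_{H_1}=\mm$ are both fine.

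The one genuine soft spot is your general existence argument for $\mc{G}$. Knowing that the reachable set $\{G\mm\}$ is convex and contains the two points $\mm$ and $\omega_V$ does not force a nontrivial intersection with the submanifold $\{\omega_{V+X}\}$: the reachable set could a priori meet that submanifold only at $\omega_V$ itself (e.g.\ if it were just the segment $[\mm,\omega_V]$, transverse to the submanifold), which corresponds to $X\propto\id$ and zero extracted work. What you actually need is that $\omega_V$ lies in the \emph{relative interior} of the set of diagonal states reachable from $\mm$ under Gibbs-preserving stochastic maps, in a direction overlapping the tangent space of the local-Gibbs family; this is true (the thermo-majorisation curve of $\mm\neq\omega_V$ lies strictly above the identity line on the open interval, so a whole neighbourhood of $\omega_V$ in the diagonal simplex is reachable), but it is precisely the content that has to be checked, not a consequence of convexity plus two points. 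The paper does not attempt the general argument at all: it goes directly to your ``fallback,'' the two-qubit case $V=\sigma_z\otimes\sigma_z$, $X=t\,\id\otimes\sigma_z$, and verifies thermo-majorisation explicitly, finding that the transition $\mm\to\omega_{\sigma_z\otimes\sigma_z+t\,\id\otimes\sigma_z}$ is possible exactly for $t\leq t_c=\tanh^{-1}\bigl((\e^2-1)/(2\e^2)\bigr)$, with resulting work $t\tanh(t)-\log(\cosh(t))>0$. So treat the explicit computation as the substantive step of the proof rather than optional polish, or else upgrade the transversality sketch to the interior-point statement above.
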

That is, protocols with thermalizing maps described by $\mc{M}_{\TO}$ (and hence the more general $\mc{M}_\GP$) can extract a strictly
larger amount of work than protocols using thermal contact. We have shown in Thm.\ \ref{thm:passive} that there exist initial configurations from which no work can be extracted in the local scenario using weak thermal contact. 
What is left to do is to discuss an example where work can be extracted from such an initial configuration using thermal operations. Consider a two-qubit system with initial Hamiltonian $V=\sigma_z\otimes \sigma_z$, where $\sigma_z$ is the Pauli-$Z$-matrix, and a maximally mixed initial state. This pair is a passive state for TC, as can be seen from the proof of Thm.~\ref{thm:passive}. One can now prove, using the ``thermo-majorisation'' condition \cite{Ruch,ThermoMaj,ThermoMaj2}, that for $\beta=1$ and $t\leq t_c:=\tanh^{-1}({\e^2-1}/({2\e^2}))\simeq 0.46$ it is possible to find a 
Gibbs-preserving map that maps $\mm$ to $\omega_{\sigma_z\otimes \sigma_z+t\id\otimes \sigma_z}$. The proof is, however, purely technical and provides no further insight. We therefore discuss it in Section \ref{sec:majorization}. Since the initial and final states are \emph{classical}, it follows that there also is a thermal operation yielding the same transition \cite{ThermoMaj2}. But since $\sigma_z\otimes \sigma_z+t \id\otimes \sigma_z\in\localhams(\sigma_z\otimes \sigma_z)$ is an allowed Hamiltonian, we can now use the optimal protocol. This allows us to extract an amount of work
\begin{equation}
\langle W\rangle^{{\localhams},\GP} (\mm,\sigma_z\otimes \sigma_z) = t\tanh(t)-\log(\cosh(t))>0,
\end{equation}
proving a gap between the extractable work using thermal operations and thermal contact.

\section{Technical results and proofs}

\subsection{General second law without control restrictions}\label{sec:aberg}
In this section we demonstrate that an arbitrary protocol that combines evolution under time-dependent Hamiltonians and Gibbs-preserving maps, that is, thermalizing maps in $\maps_\GP$, can extract at most $\Delta F(\rho,H)$ of work from the initial configuration. This has been shown before in Ref.\ \cite{Aberg14}, but we provide a proof here for convenience of the reader. Assume that the protocol consists of $N+1$ contacts to heat baths, let $(\rho_{j},H_{j})$ be the state and Hamiltonian right before we apply the $j$-th thermalizing map and $(\rho_0,H_0)$ the initial state and Hamiltonian, respectively. Right after applying a thermalizing map we have the quantum state $\sigma_{j} = \mc{G}_j(\rho_{j})$. We can always assume that the first step in the protocol is a Gibbs-preserving map, since the identity-map is also a Gibbs-preserving map. We have that $\rho_{j+1}=U_{j+1}\sigma_{j} U_{j+1}^{-1}$ where $U_{j+1}$ is the unitary that implements the time-dependent Hamiltonian dynamics starting after the $j$-th thermalizing map and ending right before the $j+1$-th thermal map. The work is given by
\begin{eqnarray}
\langle W\rangle^{\GP} &=& \sum_{j=0}^{N}
\left(
\tr\left(\sigma_{j} H_{j} - \rho_{j+1}H_{j+1}\right)
\right)
\\
&=& \sum_{j=0}^N \left(
F(\sigma_{j},H_{j})-F(\rho_{j+1},H_{j+1})\nonumber
\right),
\end{eqnarray}
where we have used that $\rho_{j+1}$ has the same entropy as $\sigma_{j}$. 
Let us re-order the sum to get 
\begin{widetext}
\begin{eqnarray}
\langle W \rangle^{\GP} &=& F(\sigma^0,H_0) - F(\rho_{N+1},H_{N+1}) + \sum_{j=1}^{N}
F(\sigma_{j},H_{j})-\sum_{j=0}^{N-1}F(\rho_{j+1},H_{j+1})
\nonumber\\
\label{eq:workbound:gp:free_energies} &=& F(\sigma_0,H_0) - F(\rho_{N+1},H_{N+1}) -\sum_{j=0}^{N-1}\left(F(\rho_{j+1},H_{j+1})-F(\sigma_{j+1},H_{j+1})\right).
\end{eqnarray}
\end{widetext}
We can now use that $\sigma_{j+1}=\mc{G}_{j+1}(\rho_{j+1})$ with $\mc{G}_{j+1}$ Gibbs-preserving relative to $H_{j+1}$: Let $\relent{\rho}{\sigma} = \tr(\rho\log \rho) - \tr(\rho \log \sigma)$ be the relative entropy. From the data-processing inequality, we get
\begin{equation}
0 \leq \relent{\rho_{j+1}}{\omega_{H_{j+1}}}-\relent{\sigma_{j+1}}{\omega_{H_{j+1}}} .
\end{equation}
{Since the relative entropy fulfils $\Delta F(\rho,H)~=~\relent{\rho}{\omega_H}/\beta$, we therefore obtain
\begin{eqnarray}
F(\rho_{j+1},H_{j+1})&-&F(\sigma_{j+1},H_{j+1})\nonumber\\ &=&\Delta F(\rho_{j+1},H_{j+1})-\Delta F(\sigma_{j+1},H_{j+1})\nonumber\\ &\geq& 0. 
\end{eqnarray}}
Inserting this into \eqref{eq:workbound:gp:free_energies}, finally yields
\begin{equation}
\langle W\rangle^{\GP}_{\OPT} \leq F(\rho,H)-F(\rho_{N+1},H) \leq \Delta F(\rho,H).
\end{equation}
With similar reasoning one obtains for a non-cyclic process from $p_0=(\rho_0,H_0)$ to $p_f=(\rho_f,H_f)$ the work-bound
\begin{equation}
\langle W\rangle^{\GP}_{\OPT}(p_0,p_f) \leq F(\rho_0,H_0)-F(\rho_f,H_f).
\end{equation}

\subsection{Classicality of the setting and passive states}
\label{sec:app:operator-norm}
We will {first} show that in the case of work-extraction using TC, the problem is largely classical, in the sense that the work bound of Thm.~1 can be achieved with protocols where $(\rho_t,H_t)$ are diagonal in the same basis at all times. In this {way}, we contribute to the discussion on the role of coherences in quantum thermodynamics.

To see this, {first note that the set $\hams(H_0)$ is compact and the infimum in the bound of Thm.~1 is indeed achieved. Furthermore, the set $\hams(H_0)$ is closed under conjugation by unitaries and the free energy fulfills 
\begin{equation}
	\Delta F(U\rho U^\dagger,H)=\Delta F(\rho,U^\dagger H U)
\end{equation}	
 for any unitary $U$. The penalty-term in Thm.~1 therefore simplifies to
\begin{align}\label{eq:app:normbound:infimum}
\inf_{\substack{H_t\in \hams(H_0)\\ \sigma\in \mathcal{U}[H_0](\rho_0)}}\!\!\!\!\!\!\Delta F(\sigma,H_t) =\!\!\! \min_{H_t\in \hams(H_0)}\!\!\!\!\Delta F(\rho_0,H_t).
\end{align}
In terms of the actual work-extraction protocol this means that we can choose as first step in the protocol a unitary dynamics that results in a mapping $(\rho_0,H_0)\mapsto(\rho_0,H^*)$. Here, $H^*$ is the Hamiltonian for which the minimum in Eq.~(\ref{eq:app:normbound:infimum}) is achieved. As explained in Section \ref{sec:aberg}, in the optimal protocol, this first step is followed by TC and after the TC an isothermal change of Hamiltonians from $H^*$ to $H^0$ is performed. This proccess has a work cost which is independent of the path of Hamiltonians, and given simply by $F(\omega(H^*),H^*)-F(\omega(H_0),H_0)$. Thus, within that process we can always choose Hamiltonians which are diagonal in the basis of $H^*$.
}
We will next show that the minimum in Eq.~(\ref{eq:app:normbound:infimum}) is always achieved for a Hamiltonian $H^*$ that is diagonal in the same basis as $\rho_0$.

\begin{lemma}[Passive states]\label{lemma:app:classical}
Let $\rho$ be a quantum state and $\mc{H}$ be a compact set of Hamiltonians that is closed under conjugation by arbitrary unitaries. Then there exists 
an $H^*\in\mc{H}$ which is diagonal in the same basis as $\rho$ and satisfies
\begin{align}
\Delta F(\rho,H^*) = \min_{H'\in\mc{H}}\Delta F(\rho,H').
\end{align}
Furthermore, $\rho$ is a passive state for $H^*$.
\end{lemma}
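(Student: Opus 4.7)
The plan is to exploit the unitary invariance of $\mc{H}$ together with the matrix rearrangement inequality (von Neumann's trace inequality). First I would rewrite the free-energy difference as
\begin{equation}
\Delta F(\rho,H) = \tr(\rho H) + \beta^{-1}\log Z_H - S(\rho)/\beta,
\end{equation}
and observe that $S(\rho)$ is independent of $H$ and that $\log Z_H$ depends only on the spectrum of $H$; in particular $Z_{UHU^\dagger}=Z_H$ for every unitary $U$. Since $H\mapsto\Delta F(\rho,H)$ is continuous (the partition function is continuous in $H$, and $\tr(\rho H)$ is linear in $H$) and $\mc{H}$ is compact, the minimum is attained at some $H_0\in\mc{H}$.

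Next I would use the hypothesis that the entire unitary orbit $\{UH_0U^\dagger : U \text{ unitary}\}$ is contained in $\mc{H}$. Letting $\lambda_1^\downarrow\geq\cdots\geq\lambda_d^\downarrow$ be the eigenvalues of $\rho$ in decreasing order and $\mu_1^\uparrow\leq\cdots\leq\mu_d^\uparrow$ be the eigenvalues of $H_0$ in increasing order, von Neumann's trace inequality yields
\begin{equation}
\tr(\rho\, U H_0 U^\dagger)\geq \sum_i \lambda_i^\downarrow \mu_i^\uparrow,
\end{equation}
with equality whenever $U$ is chosen so that $U H_0 U^\dagger$ is diagonal in the eigenbasis of $\rho$ with its eigenvalues arranged in the reverse order to those of $\rho$. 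Call the resulting Hamiltonian $H^*\in\mc{H}$. Because $Z_{H^*}=Z_{H_0}$ and $S(\rho)$ is unchanged, the rewriting above gives $\Delta F(\rho,H^*)\leq\Delta F(\rho,H_0)$, and since $H_0$ was already a minimizer, equality must hold. Hence $H^*$ also achieves the minimum.

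Finally, the construction directly yields the two claimed properties: $H^*$ is diagonal in the eigenbasis of $\rho$ by definition, and its eigenvalues are anti-aligned with those of $\rho$ — the largest populations of $\rho$ sit at the lowest-energy levels of $H^*$ — which is exactly the statement that $\rho$ is passive for $H^*$. I expect the only mild subtlety to be handling degeneracies in the spectra of $\rho$ or $H_0$, in which case any diagonalising unitary with a correct matching of eigenvalues works; such a $U$ can be written down explicitly from the spectral decompositions, so this causes no genuine obstacle.
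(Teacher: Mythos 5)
Your proposal is correct, and its overall strategy coincides with the paper's: use compactness and continuity to get a minimizer, use the unitary invariance of $Z_H$ and of $S(\rho)$ to reduce the problem to minimizing $\tr(\rho\, U H_0 U^\dagger)$ over the unitary orbit, and conclude that the anti-aligned diagonal representative $H^*$ is also a minimizer, with passivity of $\rho$ falling out of the anti-alignment. The one substantive difference is how the key rearrangement step is justified: you invoke von Neumann's trace inequality, $\tr(\rho\, UH_0U^\dagger)\geq \sum_i \lambda_i^\downarrow \mu_i^\uparrow$, as a known black box, whereas the paper proves the needed comparison from scratch -- it uses the Schur--Horn fact that the diagonal $d^\downarrow$ of $H$ in the eigenbasis of $\rho$ is majorised by the spectrum $\lambda^\downarrow$, and then an Abel-summation argument to conclude $\sum_i p^\uparrow_i\lambda^\downarrow_i\leq \sum_i p^\uparrow_i d^\downarrow_i$. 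Your route is shorter and cleaner for a reader who accepts the trace inequality; the paper's is self-contained. Your remark about degeneracies is the right level of care, and your explicit identification of the equality case of the trace inequality (commuting, anti-aligned) is what makes the passivity claim immediate.
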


{\it Proof:} 
We show this by contradiction: Suppose is the minimum is only attained for some $H$ that is not diagonal in the same basis as $\rho$. We will show that we can lower the energy-expectation value by rotating $H$ into the eigenbasis of $\rho$, which reduces the free energy difference. Let $\lambda^\downarrow$ be the ordered spectrum of $H$ in non-increasing order, $p^\uparrow$ the spectrum of $\rho$ in non-decreasing order and let $d^\downarrow$ be the diagonal of $H$ in the eigenbasis of $\rho$ in non-increasing order (compared to $p^\uparrow$). Then $E=\tr(\rho H)=\sum_i p^\uparrow_i d^\downarrow_i$. It is easy to check that the vector $d^\downarrow$ is majorised by $\lambda^\downarrow$. Let us write partial sums as 
\begin{equation}
	S^i(\lambda^\downarrow) = \sum_{k=1}^i \lambda^\downarrow_k
\end{equation}
with $S^0(\lambda^\downarrow)=0$. Then $S^i(\lambda^\downarrow)\geq S^i(d^\downarrow)$ for all $i$. Now we have
\begin{align}
E' &:= \sum_i  p^\uparrow_i \lambda^\downarrow_i = \sum_i  p^\uparrow_i (S^i(\lambda^\downarrow)-S^{i-1}(\lambda^\downarrow)) \nonumber\\ 
&= p^\uparrow_nS^n(\lambda^\downarrow) - \sum_i^{n-1}S^i(\lambda^\downarrow)(p^\uparrow_{i+1}-p^\uparrow_i)\nonumber\\
&\leq  p^\uparrow_nS^n(\lambda^\downarrow) - \sum_i^{n-1}S^i(d^\downarrow)(p^\uparrow_{i+1}-p^\uparrow_i).
\end{align}
But $S^n(\lambda^\downarrow) = \tr(H) = S^n(d^\downarrow)$ and rearranging the sum back yields
\begin{align}
E' \leq \sum_i p^\uparrow_i d^\downarrow_i = E.
\end{align}
Hence, by rotating the Hamiltonian into the eigenbasis of $\rho$ we cannot increase the energy if we keep the ordering of the eigenvalues correct.
\qed


\subsection{Two-qubit example}\label{sec:majorization}
\label{sec:app:two-qubits}
\begin{figure}
\includegraphics[width=8cm]{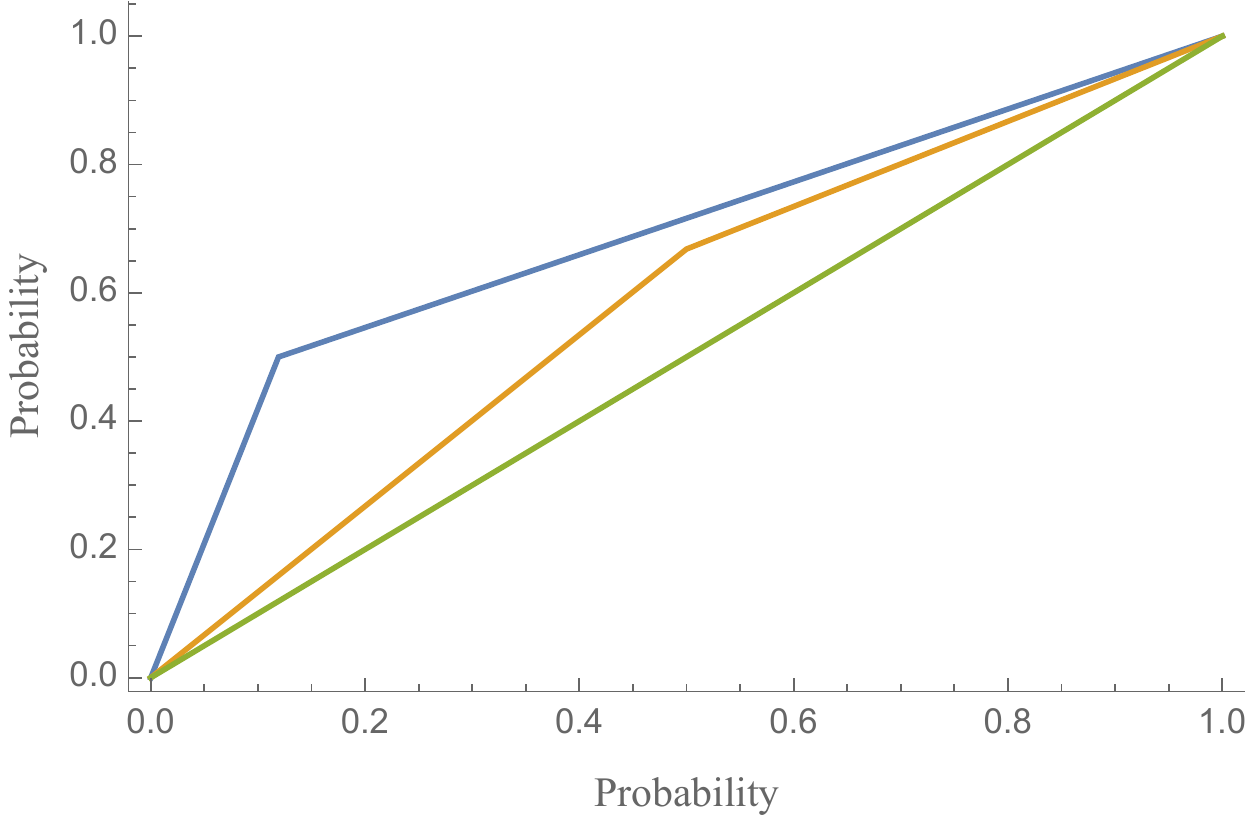}
\caption{The curves $g$ (in blue, see eq.~\eqref{eq:g}) and $f$ (in orange, see eq.~\eqref{eq:f}) illustrating the thermo-majorisation condition for the two qubit example. Since the orange curve (corresponding to the state $\omega_{\sigma_z\otimes \sigma_z+t\id \otimes \sigma_z}$ for $t< t_c$) lies below the blue curve (corresponding to the maximally mixed state) the transition from the maximally mixed state to $\omega_{\sigma_z\otimes \sigma_z+t\id \otimes \sigma_z}$ is possible using a thermal operation. The green curve (identity function) corresonds to the thermal state $\omega_{\sigma_z\otimes \sigma_z}$.}
\label{fig:maj}
\end{figure}

In this section we provide a detailed analysis of the example proving
\he{Thm.\ 5}. More precisely, we find an initial configuration $(\rho_0,H_0)$ of two qubits such that there exists a gap in work-extraction between protocols using thermalizing maps in $\maps_{\TC}$ and protocols using $\maps_\GP$.
At the heart of the argument will be majorisation theory in linear algebra \cite{Bhatia} {and its generalisation called d-majorisation or thermo-majorisation \cite{Ruch,ThermoMaj,ThermoMaj2}.}
Consider two qubits with initial configuration 
\begin{equation}
	(\rho_0,H_0)=(\mm, \sigma_z\otimes \sigma_z), 
\end{equation}
a maximally mixed state $\mm=\id/4$ on two qubits and a purely
interacting Hamiltonian. As shown in \he{Thm.\ 3}, no work can be extracted if $\hams=\hams_{\text{loc}}$ and the thermalizing maps are in $\maps_{\TC}$. We now present a protocol $\mc{P}^*$ which extracts a positive amount of work from $(\mm, \sigma_z\otimes \sigma_z)$, if $\hams=\hams_{\text{loc}}$ and thermalizing maps are in $\maps_\GP$. The first step in the protocol is to apply a GP-map $\mc{G}_0$ to the initial state $\mm$ to map it to some state $\sigma_0$. The state $\sigma_0$ is chosen such that $(\sigma_0,H_0)$ allows one to extract the difference of free energy $\Delta F(\sigma_0,H_0)$ as work just by using $\mc{P}_{\TC}^*$,
the optimal protocol employing only thermalizing maps in $\maps_{\TC}$. 
This protocol can be implemented on $(\sigma_0,H_0)$ because $\sigma_0$ is chosen so that it is the Gibbs state of an accessible Hamiltonian, that is, 
\begin{eqnarray}\label{eq:sigma0}
&&\sigma_0=\omega_{\tilde{H}}\quad \text{with}\quad \tilde{H} \in \localhams(\sigma_z\otimes \sigma_z).
\end{eqnarray}
In the following, we first introduce a state $\sigma_0$ fulfilling \eqref{eq:sigma0}, and then show that there exists a GP-map such that 
\begin{equation}
\mc{G}_0(\mm)=\sigma_0.
 \end{equation}
Consider the Gibbs state $\omega_{\sigma_z\otimes \sigma_z + t \id\otimes \sigma_z}$.  It is diagonal in the same basis as $\omega_{\sigma_z\otimes \sigma_z}$ and has the ordered spectrum (from low energy to high energy)
\begin{equation}
\omega_2=\frac{1}{2(1+\e^2)}(\e^2f^+(t)),\e^2f^{-}(t)),f^+(t),f^{-}(t)), 
\end{equation}
with $f^\pm(t) = 1\pm \tanh(t)$, while $\omega_{\sigma_z\otimes \sigma_z}$ has the ordered spectrum
\begin{equation}
\omega_1=\frac{1}{2(1+\e^2)}(\e^2,\e^2,1,1). 
\end{equation}
Since all states and Hamiltonians are diagonal in the same basis, we can use the thermo-majorisation (or $d$-majorisation) condition \cite{ThermoMaj,ThermoMaj2,Ruch} 
to decide whether we can map $\mm$, with spectrum $p=(1,1,1,1)/4$, 
to $\omega_{\sigma_z\otimes \sigma_z + t \id\otimes \sigma_z}$ by a Gibbs-preserving operation.
To evaluate thermo-majorisation we have to order the vectors with entries $r_i=(\omega_2)_i/(\omega_1)_i$ and $r_i'=p_i/(\omega_1)_i$ in non-increasing order. Let $\sigma,\sigma'$ be the permutations that do this, i.e.,
 \begin{equation}
r_{\sigma(1)}\geq \dots \geq  r_{\sigma(4)}
 \end{equation}
 and similarly for $p$ and $\sigma'$. The vectors $r,r'$ are given by
 \begin{eqnarray}
	r &=& (1+ \tanh(t),1- \tanh(t),1+ \tanh(t),1- \tanh(t)),\nonumber\\
	\\ r'&=&\frac{1+\e^2}{2}(\e^{-2},\e^{-2},1,1).
 \end{eqnarray}
Thus, our permutations are given by (for example)
\begin{align}
\sigma = \left(\begin{array}{cccc} 1 & 2 & 3 & 4 \\ 3& 1 & 4 & 2 \end{array}\right),\quad \sigma' = \left(\begin{array}{cccc} 1 & 2 & 3 & 4 \\ 3& 4 & 1 & 2 \end{array}\right).
\end{align}
In particular note that $\sigma\neq \sigma'$.  
Thermo-majorisation implies that $\id/4$ can be mapped to $\omega_{\sigma_z\otimes \sigma_z+t \id\otimes \sigma_z}$ iff the curve $g$ of straight lines connecting the points with coordinates
\begin{align}
\label{eq:g}
\left(\sum_{j=1}^k (\omega_1)_{\sigma'(j)},\sum_{j=1}^k p_{\sigma'(j)} \right),\quad k=1,\ldots,4, 
\end{align}
lies above the curve of straight lines $f$ connecting the points with coordinates
\begin{align}
\label{eq:f}
\left(\sum_{j=1}^k  (\omega_1)_{\sigma(j)},\sum_{j=1}^k  (\omega_2)_{\sigma(j)} \right),\quad k=1,\ldots,4. 
\end{align}
It follows quickly that the condition holds for the first points. For the second point, a calculation shows that 
\begin{equation}
	g(1/2) = \frac{1}{2}\left(1+\frac{\e^2-1}{2\e^2}\right)  
\end{equation}
while 
\begin{equation}
	f(1/2) = \frac{1}{2}(1+\tanh(t)). 
\end{equation}
Thus, from the second points we get the condition 
\begin{equation}
	t\leq t_c:=\tanh^{-1}\left(\frac{\e^2-1}{2\e^2}\right)\simeq 0.46. 
\end{equation}
It turns out that there are no further constraints (see Fig.\ \ref{fig:maj}).
Hence, if we choose 
\begin{equation}
	\sigma_0=\omega_{\sigma_z\otimes \sigma_z + t \id\otimes \sigma_z}
\end{equation}	
	 for any $t\leq t_c$ it fulfills \eqref{eq:sigma0} by definition and also there exists a corresponding map $\mc{G}_{0} \in \maps_{\TO}(H_0)\subset \maps_\GP(H_0)$ sending $\mm$ to $\sigma_0$.
What is left is to compute the free energy-difference, which equals the total extracted work by subsequently applying $\mc{P}_{\TC}^*$ to $(\sigma_0,H_0)$.
This is easily done and yields
\begin{eqnarray}
\nonumber \langle W \rangle^{\TC} \left(\mc{P}^*,(\mm,\sigma_z\otimes \sigma_z)\right)&=&\langle W \rangle \left(\mc{P}^*_{\TC},(\sigma_0,\sigma_z\otimes \sigma_z)\right) \nonumber\\
\nonumber &=& \Delta F(\sigma_0,\sigma_z\otimes \sigma_z)\nonumber \\ 
 &=& t\tanh(t)-\log(\cosh(t)),\nonumber\\
\end{eqnarray}
which is positive for $t\neq 0$ and bounded by from above by 
\begin{eqnarray}
\Delta F(\Omega,\sigma_z\otimes \sigma_z) = \log(\cosh(1))
\end{eqnarray}
for $|t|\leq t_c$.

\section{Conclusion}
In this work, we have established a versatile framework of thermodynamic operations under lack of experimental control. We have seen that one of the key surprising results of quantum thermodynamics, namely that weak thermal contact already allows to extract all the work that could possibly be extracted, ceases to be valid under such simple and natural constraints. This shows that operational restrictions cannot be considered independently, because they can interact in a non-trivial way.
Our results point into the direction that quantum thermodynamics is significantly more complex whenever such 
ubiquitous
constraints are present and hope that this work initiates further research in this direction.\\

\section{Acknowledgements}

We acknowledge funding from the A.-v.-H., the BMBF, the 
 EU (RAQUEL, SIQS, COST, AQuS), the ERC (TAQ)\he{, the COST network,} and the Studienstiftung des Deutschen Volkes.


\begin{thebibliography}{99}

	
\bibitem{Alicki}
	R.\ Alicki, J.\ Phys.\ A {\bf 12}, L103 (1979).
	
\bibitem{Alicki2}
R. Alicki, M. Horodecki, P. Horodecki and R. Horodecki. {Open Syst. Inf. Dyn.} {\bf 11,} 205 (2004).

\bibitem{Esposito}
M. Esposito and C. Van den Broeck. {Europhys. Lett.} {\bf 95,} 40004 (2011).

\bibitem{Linden10}
	N.\ Linden, S.\ Popescu and P.\ Skrzypczyk, 
	Phys.\ Rev.\ Lett {\bf 105,} 130401 (2010).
\bibitem{Brunner12}
	N.\ Brunner, N.\ Linden, S.\ Popescu and P.\ Skrzypczyk, 
	{Phys.\ Rev.\ E} {\bf 85,} 05111 (2012).
	

\bibitem{Anders13}
	J.\ Anders and V.\ Giovannetti, 
	{New J.\ Phys.} {\bf 15}, 033022 (2013).

\bibitem{OOE}
J.\ Eisert, M.\ Friesdorf, and C.\ Gogolin,
Nature Phys.\ {\bf 11}, 124 (2015).

\bibitem{Jarzynski97}
C. Jarzynski, {Phys. Rev. Lett.} {\bf 78,} 2690 (1997).


\bibitem{Egloff12}
	D.\ Egloff, O.\ C.\ O.\ Dahlsten, R.\ Renner and V.\ Vedral, 
	New J.\ Phys.\ {\bf 17},  073001 (2015). 

\bibitem{Aberg13}
	J.\ Aberg,\ Nature Comm.\ {\bf 4}, 1925 (2013).


\bibitem{Gallego2013}
  		R.\ Gallego, A.\  Riera,  and J.\ Eisert, Thermal machines beyond the weak coupling regime,
  		\je{New J.\ Phys.\ {\bf 16}, 125009 (2014).}

\bibitem{ThermoMaj2}
	M.\ Horodecki and J.\ Oppenheim, Nature Comm.\ {\bf 4}, 2059 (2013).
	
\bibitem{Resource}
	F.\ G.\ S.\ L.\ Brandao, M.\ Horodecki, J.\ Oppenheim, 
	J.\ M.\ Renes, and R.\ W.\ Spekkens, Phys. Rev. Lett. {\bf 111}, 250404 (2013).
	
\bibitem{SecondLaws} F.\ G.\ S.\ L.\ Brandao, M.\ Horodecki, N.\ H.\ Y.\ Ng, J.\ Oppenheim,	
	and S.\ Wehner, PNAS {\bf 112}, 3275 (2015).
	
\bibitem{SL2}
	N.\ H.\ Y.\ Ng, L.\ Mancinska, C.\ Cirstoiu, J.\ Eisert, and S.\ Wehner, 
		New J.\ Phys.\ {\bf 17}, 085004 (2015).

\bibitem{ThermoMaj}
 	H.\ D.\ Janzing, P.\ Wocjan, R.\ Zeier, R.\ Geiss, and T.\ Beth,
	Int.\ J.\ Th.\ Phys.\ {\bf 39}, 2717
	(2000).	
	

\bibitem{Renes14}
	J.\ M.\ Renes, {Euro. Phys. J. Plus} {\bf 129}, 7 (2014) 


\bibitem{RossnagelComment}	
	{This observation is related in spirit with
Ref. \cite{Rossnagel}. However, there control is used to couple the working system to a engineered non-equilibrium bath. Our result is different in the sense that it proves control advantageous for thermal baths in equilibrium.}



\bibitem{Rossnagel}	
	J. Rossnagel, O. Abah, F. Schmidt-Kaler, K. Singer, and E. Lutz, {Phys. Rev. Lett.} {\bf 112,} 030602 (2014).


\bibitem{PathConnected} 
        {We will {assume} that $\hams(H_0)$ is path-connected, which as we will see later is naturally {fulfilled} in relevant specific limitations on $\hams$.}
        
\bibitem{footnote1}
Using time-dependent Hamiltonian dynamics to model work-extraction is standard in the literature on quantum thermodynamics {\cite{Jarzynski97,Jarzynski99,Gallego2013,Anders13}}.
{It} is, however, an ongoing debate whether measuring the work done on in such a process by 
the energy expectation is {fully} justified \cite{Gallego2015}. We have chosen this work-measure as it simplifies the analysis in this work considerably and we expect that similar effects as our main results will remain true also for other work-measures.


\bibitem{Gallego2015}
  R.\ Gallego, J.\ Eisert, and H.\ Wilming, Thermodynamic work from operational principles, arXiv:1504.05056.

\bibitem{Jarzynski99}
	C.\ Jarzynski,\ J.\ Stat.\ Phys. {\bf 96}, 415 (1999).

	
\bibitem{Riera12}
	A.\ Riera, C.\ Gogolin, and J.\ Eisert,
	Phys.\ Rev.\ Lett.\ {\bf 108,} 080402 (2012).
	


\bibitem{Faist2014} 
	P.\ Faist, J.\ Oppenheim, and R.\ Renner,  New J.\ Phys. {\bf 17}, 043003 (2015).


\bibitem{Aberg14}
        J.\ Aberg, {Phys.\ Rev.\ Lett.} {\bf 113}, 150402 (2014).


\bibitem{MaxEnt}
\he{This principle states that {the} extracted work is maximal for a reversible process initially in an equilibrium state. Here we are concerned with non-equilibrium initial state. Nevertheless it is indeed the case that the optimal protocol under TC maps is reversible. However, {the} extended family of maps given by the Gibbs-preserving condition \eqref{eq:gibbspreserving} may {also be} reversible, so {that} there is indeed no reason why one could not conceive a reversible protocol allowing more work if allowing for more general operations than TC. Altogether, Observation 2 is not equivalent {with} the maximum-work principle.}

\bibitem{Allahverdyan}
A. E. Allahverdyan and Th. M. Nieuwenhuizen, {Phys. Rev.} E {\bf 71,} 046107 (2005).


\bibitem{PeierlsBogoliubov}
	D.\ Ruelle, {\it Statistical mechanics: Rigorous results} (World Scientific, 1969).	
		
\bibitem{Ruch}
        E.\ Ruch, R.\ Schranner, and T.\ H.\ Seligman, 
        J.\ Chem.\ Phys.\ {\bf 69},  386 (1978).
	
\bibitem{Bhatia}
    R.\ Bhatia, {\it Matrix analysis} (Springer, Berlin, 1997).


	
	
  	

\end{thebibliography}
\end{document}